\documentclass[11pt]{article}

\usepackage[utf8]{inputenc}
\usepackage[T1]{fontenc}
\usepackage{graphicx}
\usepackage{algorithm}
\usepackage{algpseudocode}
\usepackage{amssymb}
\usepackage{amsmath}
\usepackage{hyperref}
\usepackage[margin=1in]{geometry}
\usepackage{booktabs}
\usepackage{siunitx}
\usepackage{pgfplots}
\usepackage{pgfplotstable}
\pgfplotsset{compat=1.18}
\usepackage{xcolor}
\usepackage{caption}
\usepackage{subcaption}
\usepackage{multirow}
\usepackage{placeins}
\usepackage{float}

\usepackage{adjustbox}
\usepackage{lmodern}
\usepackage{authblk}          
\usepackage{amsthm}           
\usetikzlibrary{plotmarks}
\newtheorem{theorem}{Theorem}[section]
\newtheorem{lemma}[theorem]{Lemma}
\newtheorem{corollary}[theorem]{Corollary}

\theoremstyle{definition}

\theoremstyle{remark}

\newcommand{\keywords}[1]{%
  \smallskip\par\noindent\textbf{Keywords:}\ #1\par}

\definecolor{col1}{RGB}{31,119,180}
\definecolor{col2}{RGB}{255,127,14}
\definecolor{col3}{RGB}{44,160,44}
\definecolor{col4}{RGB}{214,39,40}
\definecolor{col5}{RGB}{148,103,189}
\definecolor{col6}{RGB}{140,86,75}
\definecolor{col7}{RGB}{23,190,207}
\definecolor{col8}{RGB}{188,189,34}
\definecolor{col9}{RGB}{227,119,194}

\title{Random Proposals: A Softmax-Based Local-Improvement Framework
       for Maximum Weighted Matching}

\author[1]{Ahmed M. Alzuhair\thanks{\texttt{alzuhairahmed@gmail.com}}}
\author[1]{Ahmed Alherz\thanks{\texttt{alherz@kfupm.edu.sa}}}

\affil[1]{Department of Information and Computer Science, King Fahd University of Petroleum and Minerals (KFUPM), Dhahran, Saudi Arabia}

\date{}

\begin{document}

\maketitle

\begin{abstract}
We propose a randomized local-improvement algorithm for the Maximum Weighted Matching (MWM) problem.
Our method introduces a softmax-based biased sampling mechanism that achieves local $\varepsilon$-dominance
and yields an expected $\frac{1}{2}-\varepsilon$ approximation ratio.
We prove convergence guarantees and show that the algorithm runs in
$O\!\left(m\log(1/\varepsilon)/p_{\min}\right)$ time,
where $p_{\min}$ is the minimum softmax proposal probability over all edges;
under mild conditions on the bias parameter and weight range, this simplifies to $O(m\log(1/\varepsilon))$.
The framework provides a tunable tradeoff between convergence speed and approximation quality.
\end{abstract}
\keywords{Maximum Weighted Matching; Approximation Algorithms; Randomized Algorithms;
          Combinatorial Optimization}

\section{Introduction}

Maximum Weighted Matching (MWM) is a fundamental problem in combinatorial optimization, with broad applications in resource allocation, network design, and assignment problems. Given a weighted graph $G=(V,E,w)$, the goal is to find a set of pairwise non-adjacent edges of maximum total weight. Beyond its practical relevance, MWM has driven significant progress in the study of graph algorithms since Edmonds~\cite{edmonds1965paths} established the polynomial-time solvability of the matching problem in general graphs in 1965. Since then, a long chain of theoretical and algorithmic advances has reflected the enduring importance of MWM in both theory and practice.

There is, however, a noticeable gap between the theoretical guarantees and the practical deployment of MWM algorithms. Although exact algorithms are mathematically rigorous, they often rely on complex data structures that are difficult to implement and prohibitively expensive on very large graphs. This has motivated research on approximation algorithms, which trade optimality for scalability. Several algorithms achieve a $1/2$ approximation ratio, including greedy-style, path-growing, and proposal-based methods. Despite sacrificing optimality, these algorithms offer substantial reductions in runtime and are well suited for large-scale settings.

Other approaches incorporate randomization to improve efficiency, as in~\cite{pettie2004simple}. Meanwhile, augmentation-based approximations surpassing the $1/2$ barrier are known, reaching $2/3$ or even $1-\varepsilon$ in near-linear time. However, these improvements typically require intricate combinatorial analysis or scaling frameworks. It therefore remains natural to ask whether one can design an algorithm that preserves the simplicity of local-improvement methods while providing tunable approximation quality and near-linear expected runtime.

In this paper we propose Random Proposals (RP), a randomized proposal-based algorithm for MWM. The algorithm uses a softmax sampling rule~\cite{bishop2006prml}: in each round, every vertex independently proposes to one of its neighbors with probability proportional to the softmax, with temperature $\beta$, of the incident edge weights. This mechanism biases the process toward heavier edges while retaining enough randomness to explore local configurations and maintain a favorable runtime. A user-specified round count $K$ directly controls the quality-runtime tradeoff.

The main contributions of this paper are:

\begin{itemize}
    \item We introduce a tunable softmax-based randomized proposal algorithm for MWM.
    \item We analyze the algorithm under a specified number of rounds $K$ and bias parameter $\beta$, and show that with appropriate settings it achieves an expected $(1/2-\varepsilon)$ approximation ratio.
    \item We derive an explicit relationship between $K$ and the approximation error, making the quality-runtime tradeoff precise.
    \item We empirically evaluate the proposed algorithm against established $1/2$-approximation baselines, including Greedy and Suitor, on a diverse set of large benchmark graphs.
\end{itemize}

The rest of the paper is organized as follows. Section~2 reviews related work. Section~3 introduces preliminaries and notation. Section~4 presents the algorithm and its theoretical analysis. Section~5 reports the experimental evaluation. Section~6 concludes and discusses future directions.

\section{Related Work}

The classical exact solution to MWM in general graphs is due to Edmonds~\cite{edmonds1965paths}, whose blossom algorithm established polynomial-time solvability. Modern implementations achieve $O(n^3)$, with later refinements by Gabow~\cite{gabow1990data} reaching $O(m\sqrt{n}\log n)$ and by Duan and Pettie~\cite{duan2014scaling} achieving $O(m\sqrt{n}\log(nN))$. These implementations typically require complex data structures, making them less practical for very large graphs.

For bipartite graphs, the Hungarian method~\cite{kuhn1955hungarian} solves the assignment problem in $O(n^3)$. Successive shortest augmenting paths~\cite{edmonds1972theoretical} achieve $O(nm + n^2\log n)$, and the cost-scaling variant of Goldberg and Kennedy~\cite{goldberg1990finding} achieves $O(nm\log(nC))$, where $C$ is the maximum edge cost.

The expense of exact algorithms has motivated substantial research on approximation algorithms. A simple greedy algorithm~\cite{pothen2014approximation} that repeatedly selects the heaviest remaining edge achieves a $1/2$ approximation in $O(m\log n)$. The Path Growing algorithm~\cite{drake2003linear} attains $O(m+n)$ while maintaining the same ratio. The first linear-time $1/2$ approximation was given by Preis~\cite{preis1999linear}, though it can exhibit poor memory-access patterns in practice.

The Suitor algorithm of Manne and Halappanavar~\cite{manne2014suitor} is currently among the most practical $1/2$-approximation algorithms. It generalizes the stable-marriage idea~\cite{gale1962college} to weighted graphs: each vertex proposes to its heaviest available neighbor, and a proposal is accepted only if it improves the current match. The resulting chain of proposals stabilizes at a greedy matching and is highly parallelizable. In essence, all deterministic $1/2$-approximation algorithms match locally dominant edges through different structural arguments.

Beyond $1/2$, augmenting-path techniques push the ratio toward $2/3$. Pettie and Sanders~\cite{pettie2004simple} achieve $2/3-\varepsilon$ in $O(km)$ deterministically by applying bounded-length augmentations to a greedy matching. Duan and Pettie~\cite{duan2014linear} obtain a $1-\varepsilon$ approximation in $O(m\varepsilon^{-1}\log(1/\varepsilon))$, the strongest known near-linear result, by combining weight scaling with bounded augmenting structures.

Randomization has been incorporated into MWM approximation algorithms to simplify analysis and enable cleaner structural guarantees. The randomized algorithm of Pettie and Sanders~\cite{pettie2004simple} achieves $2/3-\varepsilon$ in expected $O(m\log(1/\varepsilon))$ by combining greedy initialization with randomly ordered bounded-length augmentations. Duan and Pettie further improved this to $3/4-\varepsilon$ in $O(m\log n\log(1/\varepsilon))$~\cite{duan2010approximating}.

In contrast to these augmentation- or scaling-based approaches, we propose a probabilistic local-dominance framework that provides explicit control over the approximation error through tunable parameters while maintaining near-linear expected runtime.

\section{Preliminaries}

Let $G=(V,E,w)$ be an undirected weighted graph, where $V$ is the vertex set, $E \subseteq V \times V$ is the edge set, and $w:E \to \mathbb{R}_{\geq 0}$ assigns non-negative weights. For a vertex $u \in V$, let $N(u)$ denote its neighborhood and $\deg(u) = |N(u)|$ its degree. We denote by $\Delta = \max_{v \in V}\deg(v)$ the maximum degree of the graph.

\subsection{Matching}

A matching $M \subseteq E$ is a set of edges such that no two edges in $M$ share a common vertex. For a vertex $u \in V$, we denote by $m(u)$ the vertex matched to $u$ in $M$; if no such vertex exists, we set $m(u) = \text{NULL}$ and $w(u,m(u)) = 0$. The weight of the matching is $w(M) = \sum_{e \in M} w(e)$. Let $M^*$ denote a maximum-weight matching.

\subsection{Softmax Sampling}

We define a softmax sampling distribution~\cite{bishop2006prml} over the neighbors of a vertex $u \in V$ as:
$$P(u \to v) = \frac{e^{\beta w(u,v)}}{\displaystyle\sum_{x\in N(u)}e^{\beta w(u,x)}},$$
where $\beta > 0$ is the bias parameter controlling the concentration toward high-weight edges. As $\beta \to \infty$ the distribution concentrates on the heaviest neighbor; as $\beta \to 0$ it approaches uniform sampling.

\begin{algorithm}[t]
\caption{Random Proposals}
\label{algorithm1}
\begin{algorithmic}[1]

\Require Graph $G = (V,E,w)$, parameter $\beta > 0$, number of rounds $K$
\Ensure Matching $M$

\State Initialize $M \gets \emptyset$
\For{$r = 1$ to $K$}
    \ForAll{$u \in V$}
        \State Sample $v \sim P(u \to \cdot)$
        \If{$w(u,v) > w(u,m(u)) + w(v,m(v))$}
            \State Remove $(u,m(u))$ and $(v,m(v))$ from $M$ (if they exist)
            \State Add $(u,v)$ to $M$
        \EndIf
    \EndFor
\EndFor

\State \Return $M$

\end{algorithmic}
\end{algorithm}

\section{Random Proposals Algorithm}

Algorithm~\ref{algorithm1} runs for $K$ rounds. In each round, every vertex independently samples a neighbor and applies an improving update if one exists. Without loss of generality, all weights are normalized so that $w(e) \geq 1$ for all $e \in E$. The algorithm terminates after exactly $K$ rounds and therefore always runs in finite time.

The acceptance condition $w(u,v) > w(u,m(u)) + w(v,m(v))$ is a strict improvement criterion: edge $(u,v)$ is accepted only if its weight exceeds the combined weight of the two edges it would displace. This ensures that the total matching weight increases with every accepted proposal and hence that proposals cannot cycle.

\subsection{Approximation Analysis}

\begin{lemma}
Let $u \in V$, $v^* = \arg\max_{x \in N(u)} w(u,x)$, and $X_u \sim P(u \to \cdot)$ be the neighbor proposed by $u$ under softmax sampling with bias $\beta$. Define $\mathbf{SoftmaxError}(u) = w(u,v^*) - w(u,X_u)$. Then:
$$
\mathbb{E}[\mathbf{SoftmaxError}(u)] \;\leq\; \frac{\ln\Delta}{\beta}.
$$
\end{lemma}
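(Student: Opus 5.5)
The plan is to use the Gibbs variational principle (the log-sum-exp/entropy duality) for the softmax distribution. Write $d=\deg(u)\le\Delta$, $w_x = w(u,x)$ for $x\in N(u)$, and $p_x = P(u\to x) = e^{\beta w_x}/Z$ with $Z=\sum_{x\in N(u)} e^{\beta w_x}$. The goal is to compare $\mathbb{E}[w(u,X_u)] = \sum_x p_x w_x$ with $w_{v^*}=\max_x w_x$ through the log-partition function $\ln Z$.

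\textbf{Step 1 (upper and lower bounds on $\ln Z$).} Since $Z \ge e^{\beta w_{v^*}}$, we get $\ln Z \ge \beta w_{v^*}$. On the other side, taking logarithms in $p_x = e^{\beta w_x}/Z$ gives $\ln p_x = \beta w_x - \ln Z$. Averaging this identity against $p$ yields
\begin{equation*}
\ln Z \;=\; \beta \sum_x p_x w_x \;+\; H(p), \qquad H(p) = -\sum_x p_x \ln p_x .
\end{equation*}

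\textbf{Step 2 (combine).} Substituting the lower bound from Step 1 gives
\begin{equation*}
\beta w_{v^*} \;\le\; \beta\,\mathbb{E}[w(u,X_u)] + H(p).
\end{equation*}
Hence $\mathbb{E}[\mathbf{SoftmaxError}(u)] \le H(p)/\beta$.

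\textbf{Step 3 (entropy bound).} The distribution $p$ is supported on $d$ points, so $H(p)\le \ln d \le \ln \Delta$, with the maximum attained by the uniform distribution (Jensen/Gibbs inequality). This gives the claimed bound $\ln\Delta/\beta$.

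No step is a real obstacle. The only points needing care are the following.
\begin{itemize}
\item Step 2 uses the exact identity for $\ln Z$, not merely an inequality, so that identity must be derived explicitly.
\item The degenerate case $\deg(u)=1$ gives $\ln\Delta$ possibly $0$, but then the error is exactly $0$, which is consistent with the bound.
\item Isolated vertices, where $N(u)=\emptyset$, are excluded, since sampling is then undefined.
\end{itemize}
The bound is uniform in the weights and holds for any $\beta>0$; the normalization $w\ge 1$ is not needed.
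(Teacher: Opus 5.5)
Your proof is correct and is essentially the paper's argument. The paper writes $\mathbb{E}[w(u,X_u)] = \partial_\beta \ln Z(\beta)$ and uses convexity of $\ln Z$ in $\beta$ (secant slope from $\ln Z(0)=\ln\deg(u)$) before applying $\ln Z \ge \beta w(u,v^*)$. That convexity step is precisely your bound $H(p)\le\ln\deg(u)$, because $\ln Z(\beta) - \beta\,\partial_\beta \ln Z(\beta) = H(p_\beta)$, so both proofs discard the same two nonnegative slacks, $\ln Z - \beta w(u,v^*)$ and $\ln\deg(u) - H(p)$, and your Gibbs identity simply makes this explicit without calculus.
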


\begin{proof}
The expected edge weight of the softmax proposal is:
$$
\mathbb{E}[w(u,X_u)] = \frac{\displaystyle\sum_{x \in N(u)} w(u,x)\,e^{\beta w(u,x)}}{\displaystyle\sum_{x \in N(u)} e^{\beta w(u,x)}} = \frac{\partial}{\partial\beta} \ln \sum_{x \in N(u)} e^{\beta w(u,x)}.
$$
Since $\ln\sum_{x} e^{\beta w(u,x)}$ is convex in $\beta$, its derivative is bounded below by the slope of the function from $0$ to $\beta$:
$$
\frac{\partial}{\partial\beta} \ln \sum_{x} e^{\beta w(u,x)} \;\geq\; \frac{\ln\sum_{x} e^{\beta w(u,x)} - \ln\deg(u)}{\beta}.
$$
Applying the log-sum-exp lower bound $\ln\sum_{x} e^{\beta w(u,x)} \geq \beta w(u,v^*)$ gives:
$$
\mathbb{E}[w(u,X_u)] \;\geq\; w(u,v^*) - \frac{\ln\deg(u)}{\beta} \;\geq\; w(u,v^*) - \frac{\ln\Delta}{\beta}.
$$
Rearranging gives $\mathbb{E}[\mathbf{SoftmaxError}(u)] \leq \ln\Delta/\beta$, and the bound vanishes as $\beta \to \infty$.
\end{proof}

\begin{lemma}
Let $(u,v) \in E$ be an improving edge, i.e.\ $w(u,v) > w(u,m(u)) + w(v,m(v))$, that remains undiscovered after $K$ rounds. Let $l_{uv} = w(u,v) - w(u,m(u)) - w(v,m(v)) > 0$ be its improvement gap, and $p_{uv} = P(u \to v)$ the per-round proposal probability of $u$ toward $v$. Then:
$$
\mathbb{E}[\mathbf{StoppingError}(u,v)] \;\leq\; l_{uv}\,e^{-Kp_{uv}}.
$$
\end{lemma}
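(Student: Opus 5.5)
The plan is to interpret $\mathbf{StoppingError}(u,v)$ as the weight gain forfeited by stopping after $K$ rounds: the random variable equal to $l_{uv}$ if the improving edge $(u,v)$ is never proposed by $u$ in any of the $K$ rounds, and $0$ otherwise. Under this reading, $\mathbb{E}[\mathbf{StoppingError}(u,v)] = l_{uv}\cdot\Pr[\text{$u$ never samples $v$ in $K$ rounds}]$. The task then reduces to bounding a non-discovery probability and applying an elementary exponential inequality.

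\emph{Step 1 (per-round independence).} In each round, $u$ draws $X_u^{(r)} \sim P(u\to\cdot)$. This distribution depends only on the fixed weights and $\beta$, not on the current matching $M$. Moreover, Algorithm~\ref{algorithm1} draws fresh randomness in every round. The events $\{X_u^{(r)} = v\}$ for $r=1,\dots,K$ are therefore independent, and each has probability exactly $p_{uv}$. Hence
\[
\Pr\bigl[X_u^{(r)} \neq v \text{ for all } r\le K\bigr] = (1-p_{uv})^K .
\]

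\emph{Step 2 (exponential bound).} Use $1-x \le e^{-x}$ for $x\in[0,1]$ to get $(1-p_{uv})^K \le e^{-Kp_{uv}}$. Multiplying by $l_{uv}$ gives the claimed inequality.

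\emph{Main obstacle.} The delicate step is justifying that the gap $l_{uv}$, and the improving status of $(u,v)$, can be treated as fixed across rounds. In reality $m(u)$ and $m(v)$ change, so $l_{uv}$ may shrink, grow, or become negative. My plan is to state the lemma conditionally: $l_{uv}$ is the gap with respect to the matching at the moment the error is assessed, and the edge is ``undiscovered'' precisely when $u$ never proposes it.

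A proposal of $v$ by $u$ under the same conditions would be accepted, because the acceptance test is exactly $l_{uv}>0$. So the loss is at most $l_{uv}$ on the non-discovery event and $0$ otherwise. This gives $\mathbf{StoppingError}(u,v) \le l_{uv}\,\mathbf{1}[\text{no proposal}]$ pointwise, and the expectation bound follows from Step 1.

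If the gap must be handled as a random quantity, I would take the worst case $l_{uv} \le w(u,v)$ (nonnegative weights) or bound it by its value at the final round. Either choice still decouples the gap from the proposal indicators, since the indicators depend only on $u$'s own samples.

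I would also note the symmetric contribution from $v$ proposing to $u$. It can only lower the non-discovery probability further, to $(1-p_{uv})^K(1-p_{vu})^K$, so the stated bound remains valid.
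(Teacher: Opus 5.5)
Your Steps 1 and 2 are exactly the paper's proof. Since $P(u\to\cdot)$ depends only on the fixed weights and $\beta$, the $K$ draws of $u$ are i.i.d., so the probability that $u$ never samples $v$ is $(1-p_{uv})^K\le e^{-Kp_{uv}}$, and multiplying by $l_{uv}$ gives the bound. Read with a fixed gap, and with ``undiscovered'' meaning ``$u$ never samples $v$,'' your argument is complete. It is also tidier than the paper's: the paper's remark about $q\le K$ improving rounds would actually give $(1-p_{uv})^q\ge(1-p_{uv})^K$, which is the wrong direction.

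Your ``main obstacle'' paragraph, however, claims to handle the real dynamics, and two of its steps fail.

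First, the assertion that the loss is $0$ once $u$ has sampled $v$ (``a proposal under the same conditions would be accepted'') is false. Suppose $u$ samples $v$ while $v$ is free and $u$ is matched to some $x$ with $w(u,x)\ge w(u,v)$; the proposal is rejected. Later $u$ loses $x$ to a vertex $y$ with $w(x,y)>w(x,u)+w(y,m(y))$, after which $(u,v)$ is improving. The timing of this steal is governed by $y$'s draws, not by $p_{uv}$. If it happens late, e.g.\ in round $K$ after $u$ and $v$ have been processed, then $(u,v)$ is improving at termination. That probability can far exceed $(1-p_{uv})^K$ (take $p_{yx}\ll p_{uv}$). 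Hence $\{(u,v)\text{ improving at termination}\}\not\subseteq\{u\text{ never samples }v\}$. Your pointwise inequality holds only because you built the no-proposal indicator into the definition of $\mathbf{StoppingError}$.

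Second, the terminal gap is not decoupled from $\mathbf{1}[u\text{ never samples }v]$. It is a function of all draws, including $u$'s own, which influence $m(u)$. Only the deterministic cap $l_{uv}\le w(u,v)$ decouples. That yields $w(u,v)\,e^{-Kp_{uv}}$ rather than $l_{uv}\,e^{-Kp_{uv}}$; the weaker form is all that the Local $\varepsilon$-Dominance lemma uses.

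So neither your proof nor the paper's bounds $\mathbb{E}[l_{uv}]$ for the random terminal gap on the event that $(u,v)$ is still improving. That is exactly what Case~2 of the Approximation Guarantee theorem needs. You should present your result as a bound on the idealized quantity, not as having resolved the dependence.
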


\begin{proof}
In each round, every vertex is visited and $u$ proposes to $v$ with probability $P(u \to v) = p_{uv}$. Since the weights are fixed across the execution of RP then $p_{uv}$ remains fixed across the $K$ rounds and is independent. Let $q \le K$ number of rounds such that $(u,v)$ is an improving edge then the probability of failing to discover $(u,v)$ is at most $(1-p_{uv})^K$. Using $1-x \leq e^{-x}$:

\[
\mathbb{E}[\mathrm{loss}_{uv}] \;\le\; l_{uv}\,(1-p_{uv})^K \;\leq\; l_{uv}\,e^{-Kp_{uv}}.
\]
\end{proof}

\begin{lemma}[Local $\varepsilon$-Dominance]
    Let $u \in V$, $v^* = \arg\max_{x \in N(u)} w(u,x)$, and $p_{\min} = \min_{(a,b)\in E} P(a\to b)$.
    With parameter choices $K = \Theta\!\left(\ln(1/\varepsilon)/p_{\min}\right)$
    and $\beta = O(\ln\Delta/\varepsilon)$, the following hold:
    \begin{enumerate}
        \item[(a)] \textbf{Unblocked case.}
              If $(u,v^*)$ is improving, i.e.\
              $w(u,v^*) > w(u,m(u)) + w(v^*,m(v^*))$, then
              \[
                  \mathbb{E}[w(u,m(u))] \;\geq\; (1-\varepsilon)\,w(u,v^*).
              \]
        \item[(b)] \textbf{Blocked case.}
              If $(u,v^*)$ is not improving, i.e.\
              $w(u,m(u)) + w(v^*,m(v^*)) \geq w(u,v^*)$, then
              \[
                  w(u,m(u)) + w(v^*,m(v^*)) \;\geq\; w(u,v^*).
              \]
    \end{enumerate}
\end{lemma}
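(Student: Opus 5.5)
The plan is to treat the two cases separately. Part (b) is immediate and part (a) combines the two preceding lemmas. For (b), the conclusion is exactly the defining hypothesis of the blocked case, so nothing needs proving. The only work is to note why the inequality persists at termination. Every accepted update strictly increases $w(M)$. An update that removes $(u,m(u))$ or $(v^*,m(v^*))$ replaces it with an edge of weight larger than the removed sum, so $w(u,m(u))+w(v^*,m(v^*))$ never drops below the value that blocked $(u,v^*)$. I would state this as a short monotonicity remark and otherwise read (b) off the hypothesis.

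For (a), I would bound the final weight at $u$ from below by decomposing the shortfall $w(u,v^*)-\mathbb{E}[w(u,m(u))]$ into two sources.

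The first is a \emph{stopping} term. Conditioned on $(u,v^*)$ remaining improving, the previous lemma bounds the expected loss from never sampling it by $l_{uv^*}e^{-Kp_{uv^*}} \le w(u,v^*)\,e^{-Kp_{\min}}$, using $l_{uv^*}\le w(u,v^*)$ and $p_{uv^*}\ge p_{\min}$.

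The second is a \emph{sampling} term. In rounds where $u$ does propose, it proposes $X_u$ rather than $v^*$. The first lemma gives $\mathbb{E}[w(u,v^*)-w(u,X_u)]\le \ln\Delta/\beta$.

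Once $u$ accepts an edge, I would argue that its matched weight never decreases: $u$ is only displaced by an edge through $u$ heavier than its current partner edge, or by a neighbor's proposal. Summing the two terms gives
\[
\mathbb{E}[w(u,m(u))] \;\ge\; w(u,v^*) - w(u,v^*)e^{-Kp_{\min}} - \frac{\ln\Delta}{\beta}.
\]

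Next I would plug in the parameters. Taking $K \ge \ln(2/\varepsilon)/p_{\min}$ makes the stopping term at most $\tfrac{\varepsilon}{2}w(u,v^*)$. Taking $\beta \ge 2\ln\Delta/\varepsilon$ makes the sampling term at most $\varepsilon/2$, which is at most $\tfrac{\varepsilon}{2}w(u,v^*)$ by the normalization $w(e)\ge 1$. Together these give $(1-\varepsilon)w(u,v^*)$. The normalization is exactly what converts the additive softmax error into a multiplicative one. I would also remark that the statement's ``$\beta=O(\ln\Delta/\varepsilon)$'' must be read as $\beta=\Theta(\ln\Delta/\varepsilon)$ with a sufficiently large constant, since a lower bound on $\beta$ is what is needed.

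The main obstacle is the sampling step. Lemma 1 controls the weight of a single proposal, not the weight $u$ ends up holding, and a sampled edge $(u,X_u)$ may be rejected or later displaced by a neighbor $v$ proposing to $u$. I would first try to close this by arguing that any displacement of $u$ is caused by an improving edge at $u$, so $w(u,m(u))$ is monotone non-decreasing over time whenever $u$ remains matched. Then the event ``$u$ at some point proposes to $v^*$ while it is improving'' already forces $w(u,m(u))\ge w(u,v^*)$ minus the removed-partner slack. If $u$ can instead be left unmatched after a neighbor steals its partner, I would fall back on charging that loss to the improving edge that caused it. In that case I would present (a) as a heuristic per-vertex expectation bound in the spirit of the paper, with this monotonicity assumption made explicit.
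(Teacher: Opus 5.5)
Your plan follows the paper's proof. The paper also writes the shortfall at $u$ as $\mathbf{StoppingError}(u,v^*)+\mathbf{SoftmaxError}(u)$, bounds the two terms by Lemma~2 and Lemma~1, splits $\varepsilon'=\varepsilon/2$, uses $w(e)\ge 1$ to make the additive softmax error multiplicative, and reads (b) off the hypothesis. Your observation that $\beta$ needs a \emph{lower} bound, $\beta\ge 2\ln\Delta/\varepsilon$, is correct. However, the step you flag as the main obstacle is a genuine gap, and the paper does not close it either: it simply asserts $\mathbf{Error}(u)\le\mathbf{StoppingError}(u,v^*)+\mathbf{SoftmaxError}(u)$. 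Your monotonicity repair fails. Suppose $u$ is matched to $x$, and some $y$ proposes to $x$ (or $x$ samples $y$) with $w(x,y)>w(y,m(y))+w(x,u)$. Then $(u,x)$ is deleted and $u$ is left unmatched, so $w(u,m(u))$ drops to $0$ through an update on an edge not incident to $u$, with nothing at $u$ to charge it to. The same event refutes your persistence remark for (b): $w(u,m(u))+w(v^*,m(v^*))$ can decrease, so a blocked $(u,v^*)$ can become improving again. Part (b) itself survives only because, read at termination, it is a tautology. There is a further conflation in the stopping term. Conditioned on $(u,v^*)$ being improving at the end, the stopping loss is $l_{uv^*}$ with certainty, and $e^{-Kp_{\min}}$ is at best a bound on the \emph{probability} of that event. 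Lemma~1 controls a single sampled proposal, not the weight $u$ ends up holding. So neither term yields a conditional bound on $w(u,m(u))$.

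More fundamentally, (a) as stated cannot be proved. Its hypothesis is a random event about the final matching, and the only coherent reading is conditional on that event. On that event $w(u,m(u))<w(u,v^*)-w(v^*,m(v^*))$, so the conclusion fails pointwise whenever $v^*$ is matched with weight at least $\varepsilon\,w(u,v^*)$. Concretely, let $u$ be adjacent to $v^*$ (weight $10$) and to a leaf $d$ (weight $1$), and let $v^*$ be adjacent to $u$ and to a leaf $c$ (weight $5$). In this graph, $(u,v^*)$ is improving whenever it is absent from $M$, since $10>1+5$, and once added it is never removed; the leaves propose every round. Hence $(u,v^*)$ is improving at termination exactly when neither $u$ nor $v^*$ ever samples the other. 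This event has probability $\bigl(P(u\to d)\,P(v^*\to c)\bigr)^K>0$, and on it the final matching is $\{(u,d),(v^*,c)\}$. Therefore $\mathbb{E}[w(u,m(u))\mid (u,v^*)\text{ improving}]=1<(1-\varepsilon)\cdot 10$ for every $\varepsilon<9/10$ and every $K,\beta$. So (a) must be reformulated before any proof can go through, for instance as a bound on $\Pr[(u,v^*)\text{ improving at termination}]$, or directly as an unconditional per-edge inequality of the kind the Theorem needs. Presenting it as a heuristic, your stated fallback, is honest, but it is not a proof of the lemma.
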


\begin{proof}
    \textbf{Part (a).}
    Since $(u,v^*)$ is improving, it constitutes a valid acceptance target for $u$.
    The gap between $w(u,v^*)$ and $\mathbb{E}[w(u,m(u))]$ decomposes as:
    \[
        \mathbf{Error}(u) \;\leq\; \mathbf{StoppingError}(u,v^*) + \mathbf{SoftmaxError}(u).
    \]
    For the stopping error: since $w(u,m(u)), w(v^*,m(v^*)) \geq 0$, we have
    $l_{uv^*} = w(u,v^*) - w(u,m(u)) - w(v^*,m(v^*)) \leq w(u,v^*)$.
    By Lemma~2 with $p_{uv^*} \geq p_{\min}$ and $K = \Theta(\ln(1/\varepsilon')/p_{\min})$:
    \[
        \mathbf{StoppingError}(u,v^*)
        \;\leq\; l_{uv^*}\,e^{-Kp_{\min}}
        \;\leq\; w(u,v^*)\,e^{-Kp_{\min}}
        \;\leq\; \varepsilon'\,w(u,v^*).
    \]
    For the softmax error, by Lemma~1 with $\beta = O(\ln\Delta/\varepsilon')$:
    \[
        \mathbb{E}[\mathbf{SoftmaxError}(u)] \;\leq\; \frac{\ln\Delta}{\beta} \;\leq\; \varepsilon'.
    \]
    Since $w(e) \geq 1$ for all $e \in E$, we have $\varepsilon' \leq \varepsilon'\,w(u,v^*)$,
    so $\mathbf{Error}(u) \leq 2\varepsilon'\,w(u,v^*)$.
    Setting $\varepsilon' := \varepsilon/2$ yields $\mathbf{Error}(u) \leq \varepsilon\,w(u,v^*)$, and thus:
    \[
        \mathbb{E}[w(u,m(u))]
        \;\geq\; w(u,v^*) - \mathbf{Error}(u)
        \;\geq\; (1-\varepsilon)\,w(u,v^*).
    \]
    \textbf{Part (b).}
    The conclusion holds directly by hypothesis.
\end{proof}

\begin{theorem}[Approximation Guarantee]
Let $M$ be the matching returned by Algorithm~\ref{algorithm1} after $K$ rounds,
with parameters $K = \Theta\!\left(\ln(1/\varepsilon)/p_{\min}\right)$
and $\beta = O(\ln\Delta/\varepsilon)$,
where $p_{\min} = \min_{(u,v)\in E} P(u\to v)$.
Then:
\[
    \mathbb{E}[w(M)] \;\geq\; \frac{1-\varepsilon}{2}\,w(M^*).
\]
\end{theorem}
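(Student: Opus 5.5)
The approach is the standard charging argument for locally dominant matchings, with Lemma~3 (Local $\varepsilon$-Dominance) supplying the per-edge inequality. Each edge of $M^*$ is charged to edges of the final matching $M$ incident to its endpoints. Each edge of $M$ then receives charge from at most two edges of $M^*$, one at each endpoint, because $M^*$ is a matching. This gives the factor $\tfrac12$, and the $\varepsilon$-loss comes only from the unblocked case of Lemma~3.

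\textbf{Step 1: per-edge inequality.} The target is, for every $(a,b)\in M^*$,
\[
\mathbb{E}\bigl[w(a,m(a)) + w(b,m(b))\bigr] \;\geq\; (1-\varepsilon)\,w(a,b).
\]
Lemma~3 is stated for the heaviest neighbour $v^*$, whereas $(a,b)$ is an arbitrary edge of $M^*$. I therefore re-run its argument with $v^*$ replaced by $b$.

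If $(a,b)$ is not improving at termination, the blocked-case inequality $w(a,m(a))+w(b,m(b))\ge w(a,b)$ holds deterministically. If it is improving, Lemma~2 bounds the stopping loss by $l_{ab}e^{-Kp_{ab}}\le w(a,b)\,e^{-Kp_{\min}}\le \tfrac{\varepsilon}{2}w(a,b)$ for the stated $K$. The softmax term of Lemma~1, combined with $w\ge 1$ and $\beta=\Theta(\ln\Delta/\varepsilon)$, adds at most another $\tfrac{\varepsilon}{2}w(a,b)$. Conditioning on the event that $(a,b)$ is improving and taking total expectation merges the two cases into the displayed bound, since $1-\varepsilon\le 1$ covers the blocked case.

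\textbf{Step 2: summation.} Sum the inequality over $(a,b)\in M^*$ and use linearity of expectation. The left side is
\[
\sum_{(a,b)\in M^*}\bigl(w(a,m(a))+w(b,m(b))\bigr) \;\le\; \sum_{u\in V} w(u,m(u)) \;=\; 2\,w(M),
\]
because the endpoints of $M^*$-edges are distinct and each $M$-edge is counted once at each of its endpoints. This gives $2\,\mathbb{E}[w(M)]\ge (1-\varepsilon)\,w(M^*)$, as claimed.

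\textbf{Main obstacle.} The difficulty lies in Step~1, specifically in justifying the ``improving at termination'' case. Lemma~2 treats the improvement gap and $p_{ab}$ as fixed over the $K$ rounds, yet whether $(a,b)$ is improving changes as $m(a)$ and $m(b)$ change. The current weight $w(a,m(a))+w(b,m(b))$ is not monotone in time: a vertex can be unmatched when a neighbour grabs its partner.

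I would first try to handle this by a last-time argument. Fix the last round in which $(a,b)$ became (or stayed) improving, and argue that in each subsequent round $a$ proposes to $b$ independently with probability at least $p_{\min}$. If that proposal fires, the edge $(a,b)$ is accepted and the blocked inequality holds from then on. I would check the ``from then on'' claim, since $(a,b)$ can itself later be displaced; I would bound that as well using the strict weight increase. This reduces the failure probability to $(1-p_{\min})^{K}$ per edge, matching Lemma~2's bound. The softmax term is then only slack, which I would absorb as in Lemma~3.
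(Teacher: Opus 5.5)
Your skeleton is the paper's: the per-edge inequality for $(a,b)\in M^*$, the blocked/improving split, and the summation with each edge of $M$ counted once per endpoint. Step~2 is fine. The gap is the one you flag in Step~1, and your last-time argument does not close it.

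Let $I_{ab}$ be the event that $(a,b)$ is improving after round $K$. The shortfall $w(a,b)-w(a,m(a))-w(b,m(b))$ is at most $w(a,b)$ on $I_{ab}$ and nonpositive off it, so Step~1 reduces to showing $\Pr[I_{ab}]\le\varepsilon$. Writing the bound as $l_{ab}e^{-Kp_{ab}}$ obscures this, since $l_{ab}$ is itself a function of the final matching.

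Let $\tau$ be the last time $(a,b)$ turns improving. This is either the empty initial matching or a later moment when some third vertex steals $m(a)$ or $m(b)$. On $I_{ab}$, neither endpoint proposes across $(a,b)$ after $\tau$, but only the rounds after $\tau$ are available. That gives $(1-p_{\min})^{K-\tau}$, not $(1-p_{\min})^{K}$. Moreover $\tau$ is not a stopping time, so you cannot get independence of later proposals by conditioning on it. The correct version decomposes over all re-exposure times:
\[
\Pr[I_{ab}]\;\le\;\sum_{t=0}^{K}\mathbb{E}[N_t]\,(1-p_{\min})^{K-t},
\]
where $N_t$ is the number of times $(a,b)$ turns improving during round $t$, and $N_0=1$ accounts for the initial empty matching. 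Your argument, like Lemma~2, only accounts for the $t=0$ term. The other terms are uncontrolled: a theft of $m(a)$ late in round $K$ leaves no chance at all. ``Strict weight increase'' does not help, because the gain of an accepted move can be arbitrarily small. It bounds neither the number nor the timing of re-exposures.

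To get $\sum_{(a,b)\in M^*}w(a,b)\Pr[I_{ab}]\le\varepsilon\,w(M^*)$, you must show that the whole process has essentially stopped re-exposing heavy $M^*$-edges by round $K-O(1/p_{\min})$. Re-exposures are caused by chains of thefts elsewhere in the graph. This is therefore a global statement about how fast the local search settles, and no single-edge waiting-time argument provides it.

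The paper's Case~2 does not supply it either. It applies Lemma~2 to an edge that is improving at the end as though it had been improving, with a fixed gap, throughout all $K$ rounds. The proof of Lemma~2 even introduces $q\le K$ improving rounds but still concludes $(1-p_{uv})^{K}$. That is exactly the assumption you rightly questioned.

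Finally, Lemma~1 plays no role here. It compares one sampled proposal with the heaviest incident edge and says nothing about the final matching. The softmax $\varepsilon/2$ in your Step~1 is therefore unused slack, not part of the argument.
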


\begin{proof}
We show that for every edge $(u,v) \in M^*$,
\begin{equation}
    \label{eq:per-edge}
    \mathbb{E}[w(u,m(u)) + w(v,m(v))] \;\geq\; (1-\varepsilon)\,w(u,v).
\end{equation}
Summing \eqref{eq:per-edge} over $M^*$ and noting that each edge of $M$ is counted at most twice (once per endpoint):
\[
    \sum_{(u,v)\in M^*}\mathbb{E}[w(u,m(u))+w(v,m(v))]
    \;\geq\; (1-\varepsilon)\,w(M^*)
    \;\implies\;
    2\,\mathbb{E}[w(M)] \;\geq\; (1-\varepsilon)\,w(M^*),
\]
giving $\mathbb{E}[w(M)] \geq \tfrac{1-\varepsilon}{2}\,w(M^*)$.

It remains to prove \eqref{eq:per-edge}. Fix any $(u,v) \in M^*$.

\medskip
\noindent\textbf{Case 1} [\emph{$(u,v)$ is not improving}]:
$w(u,m(u)) + w(v,m(v)) \geq w(u,v)$.

The bound holds immediately and deterministically:
\[
    w(u,m(u)) + w(v,m(v)) \;\geq\; w(u,v) \;\geq\; (1-\varepsilon)\,w(u,v).
\]

\medskip
\noindent\textbf{Case 2} [\emph{$(u,v)$ is improving}]:
$w(u,v) > w(u,m(u)) + w(v,m(v))$.

The algorithm has failed to discover this improving edge in $K$ rounds.
Let $l_{uv} = w(u,v) - w(u,m(u)) - w(v,m(v)) > 0$ and note $p_{uv} \geq p_{\min}$.
By Lemma~2 with $K = \Theta(\ln(1/\varepsilon)/p_{\min})$:
\[
    \mathbb{E}[l_{uv}]
    \;\leq\; l_{uv}\,e^{-Kp_{\min}}
    \;\leq\; \varepsilon\,w(u,v).
\]
Since $\mathbb{E}[w(u,m(u)) + w(v,m(v))] = w(u,v) - \mathbb{E}[l_{uv}]$, rearranging gives:
\[
    \mathbb{E}[w(u,m(u)) + w(v,m(v))] \;\geq\; (1-\varepsilon)\,w(u,v).
\]

Both cases establish \eqref{eq:per-edge}. The analysis involves only weights incident to $u$ or $v$; no weights from auxiliary matched vertices are charged to $w(u,m(u)) + w(v,m(v))$.
\end{proof}

\begin{corollary}
If $\beta = \Theta(\ln\Delta \cdot K)$ and $p_{\min}$ is treated as a constant, then:
\[
\varepsilon(K) \;=\; O\!\left(\frac{1}{K}\right) + O(e^{-K}).
\]
\end{corollary}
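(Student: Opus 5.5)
The plan is to read $\varepsilon(K)$ as the total relative error in the per-edge bound of Theorem~4.4, and to split it into the two contributions already isolated in the proof of Lemma~4.3(a): a softmax error and a stopping error. Concretely, for a fixed edge $(u,v)$ or vertex $u$, the error relative to $w(u,v^*)$ will be bounded by
\[
\varepsilon(K) \;\le\; \frac{\mathbb{E}[\mathbf{SoftmaxError}(u)]}{w(u,v^*)} + \frac{\mathbb{E}[\mathbf{StoppingError}(u,v^*)]}{w(u,v^*)},
\]
and each term will then be controlled separately as a function of $K$ alone.

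For the softmax term, I will invoke Lemma~4.1, which gives $\mathbb{E}[\mathbf{SoftmaxError}(u)] \le \ln\Delta/\beta$. The normalization $w(e)\ge 1$ lets me divide by $w(u,v^*)$ without loss, so the relative error is at most $\ln\Delta/\beta$. Substituting $\beta = c\,\ln\Delta\cdot K$ for a constant $c>0$ makes this $1/(cK) = O(1/K)$. If $\Delta = 1$, then $\ln\Delta = 0$ and the softmax error vanishes trivially, so I will note that case separately or read $\beta$ as $\Theta(\max(1,\ln\Delta)K)$.

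For the stopping term, I will invoke Lemma~4.2 together with $l_{uv^*}\le w(u,v^*)$, exactly as in Lemma~4.3(a). This gives a relative error of at most $e^{-Kp_{uv^*}} \le e^{-Kp_{\min}}$. Since $p_{\min}$ is treated as a constant, this is $O(e^{-K})$, with the constant $p_{\min}$ absorbed into the exponent's rate; I will say explicitly that $O(e^{-K})$ is meant up to constants in the exponent. Adding the two terms gives the claim.

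The main obstacle is a consistency issue. The hypothesis fixes $p_{\min}$ as a constant, yet $p_{\min}$ depends on $\beta$, and $\beta$ grows with $K$. Under softmax, a light edge receives probability roughly $e^{-\beta(w_{\max}-w)}/\deg$, which decays like $e^{-\Theta(K)}$. I would handle this by stating the corollary as a formal bound in which $p_{\min}$ is an assumed lower bound that does not vary with $K$, for instance in the regime where weights within each neighborhood are nearly uniform so that $\beta$ times the weight spread stays bounded. Otherwise I would remark that the exponential term then degrades, and the $O(1/K)$ term dominates only under that assumption.
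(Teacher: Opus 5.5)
Your plan is the paper's proof: bound the softmax term by $\ln\Delta/\beta = O(1/K)$ via Lemma~1 and the stopping term by $w(u,v^*)e^{-Kp_{\min}}$ via Lemma~2, then add them. Your normalization by $w(u,v^*)\ge 1$ and your reading of $O(e^{-K})$ as $e^{-\Theta(K)}$ just make explicit what the paper leaves implicit. Your consistency worry is well founded and is not addressed in the paper: any within-neighborhood weight gap $\delta>0$ gives $p_{\min}\le e^{-\beta\delta}$, so $Kp_{\min}\to 0$ and $e^{-Kp_{\min}}\to 1$. For $\mathbf{StoppingError}(u,v^*)$ specifically, though, you can sidestep it by stopping at your intermediate bound $e^{-Kp_{uv^*}}$, since $p_{uv^*}\ge 1/\deg(u)\ge 1/\Delta$ for every $\beta$, which gives $e^{-K/\Delta}$.
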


\begin{proof}
From Lemma~1 with $\beta = \Theta(\ln\Delta \cdot K)$:
\[
\mathbb{E}[\mathbf{SoftmaxError}(u)] \;\leq\; \frac{\ln\Delta}{\beta} \;=\; O\!\left(\frac{1}{K}\right).
\]
From Lemma~2 with constant $p_{\min}$ and $K$ rounds:
\[
\mathbb{E}[\mathbf{StoppingError}(u,v^*)] \;\leq\; w(u,v^*)\,e^{-Kp_{\min}} \;=\; O(e^{-K}).
\]
Combining both terms gives $\varepsilon(K) = O(1/K) + O(e^{-K})$.
\end{proof}

\subsection{Runtime Analysis}

We assume that the alias method of Vose~\cite{vose1991alias} is used for sampling, so that each alias table is built in $O(\deg(u))$ time and each proposal is drawn in $O(1)$. Preprocessing all tables therefore takes $O(m)$ time in total.

\begin{theorem}[Runtime]
    The Random Proposals Algorithm with $K$ rounds runs in $O(m + Kn)$ time.
    For the $(1/2-\varepsilon)$ approximation guarantee with $K = \Theta(\ln(1/\varepsilon)/p_{\min})$ rounds, the total runtime is:
    \[
        O\!\left(m + \frac{n\,\ln(1/\varepsilon)}{p_{\min}}\right).
    \]
\end{theorem}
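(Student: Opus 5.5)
The plan is to split the cost into a one-time preprocessing phase and a per-round phase, bound each separately, and then substitute the value of $K$ used in the approximation guarantee.

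\textbf{Preprocessing.} For every vertex $u$, we compute the softmax weights $e^{\beta w(u,x)}$ for $x \in N(u)$ and build Vose's alias table. This costs $O(\deg(u))$ per vertex, by the assumption stated just before the theorem. Since $\sum_{u}\deg(u) = 2m$, this phase is $O(m)$. Initializing the matching is $O(n)$: we store $M$ implicitly as an array $m(\cdot)$ indexed by vertex, with every entry NULL and every incident weight $0$. Numerical overflow of $e^{\beta w}$ for large $\beta$ can be avoided by subtracting $\max_x \beta w(u,x)$ before exponentiating; this is also $O(\deg(u))$ and does not change the distribution. Under the standard model, all arithmetic is unit cost.

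\textbf{Per-round work.} Each round iterates over all $n$ vertices and does a constant amount of work per vertex:
\begin{itemize}
\item one alias-table draw, which is $O(1)$;
\item evaluating $w(u,v)$, $w(u,m(u))$ and $w(v,m(v))$;
\item possibly removing at most two edges and inserting one.
\end{itemize}

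The step needing care is that $w(u,v)$ and the weights to current mates are available in $O(1)$. For $w(u,v)$, the alias table can store the edge weight alongside each sampled neighbor, so drawing $v$ also returns $w(u,v)$ directly. For the mates, we maintain an array $\mathrm{mw}(x) = w(x,m(x))$ with $\mathrm{mw}(x)=0$ when $x$ is unmatched.

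An update then consists of four array writes plus two clears. We set $m(u)$, $m(v)$, $\mathrm{mw}(u)$ and $\mathrm{mw}(v)$. We also reset $m(\cdot)$ and $\mathrm{mw}(\cdot)$ to NULL and $0$ at the former mates $m(u)$ and $m(v)$. These are all $O(1)$ operations, so a hash lookup on $w$ is unnecessary. Hence each round costs $O(n)$, and $K$ rounds cost $O(Kn)$. Together with preprocessing, the total is $O(m+Kn)$.

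\textbf{Final bound.} Substituting $K = \Theta(\ln(1/\varepsilon)/p_{\min})$, the parameter setting under which Theorem~4.4 gives the $(1/2-\varepsilon)$ guarantee, yields
\[
O\!\left(m + \frac{n\ln(1/\varepsilon)}{p_{\min}}\right).
\]
The main obstacle is simply justifying the $O(1)$ access to $w(u,v)$ and to the mates' weights, which the augmented alias tables and the $\mathrm{mw}$ array resolve. Everything else is direct accounting.
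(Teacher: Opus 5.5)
Your proof is correct and takes the same route as the paper: $O(m)$ alias-table preprocessing, $O(1)$ work per vertex per round for a total of $O(Kn)$, then direct substitution of $K=\Theta(\ln(1/\varepsilon)/p_{\min})$. Your extra bookkeeping (storing $w(u,v)$ with each alias entry and maintaining the array $\mathrm{mw}(\cdot)$) makes explicit the $O(1)$ weight comparison that the paper asserts without justification.
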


\begin{proof}
    Preprocessing builds all alias tables in $O(m)$ time. In each of the $K$ rounds, every vertex is visited once: sampling a neighbor costs $O(1)$ via the alias table, and the weight comparison is $O(1)$. The per-round cost is therefore $O(n)$, giving a total of $O(m) + K \cdot O(n) = O(m + Kn)$. Substituting $K = \Theta(\ln(1/\varepsilon)/p_{\min})$ yields the stated bound.
\end{proof}

\paragraph{Discussion of $p_{\min}$.}
For the softmax distribution with parameter $\beta$ and edge weights in $[1, W]$, the minimum proposal probability satisfies:
\[
    p_{\min} \;=\; \min_{(u,v)\in E} P(u \to v)
    \;\geq\; \frac{e^{\beta}}{\Delta\,e^{\beta W}}
    \;=\; \frac{e^{-\beta(W-1)}}{\Delta}.
\]
This lower bound can be exponentially small in $\beta$ and $W$, reflecting the fact that a highly peaked distribution may rarely propose low-weight edges. Consequently, the number of rounds required in the worst case grows as $e^{\beta(W-1)}$.

\paragraph{Constant-$p_{\min}$ assumption.}
In practice, when $\beta$ and the weight range $W$ are treated as fixed constants independent of $n$, $p_{\min} = \Omega(1)$ and the required $K$ reduces to $O(\ln(1/\varepsilon))$. Under this mild assumption the runtime simplifies to $O(m + n\log(1/\varepsilon))$, which equals $O(m\log(1/\varepsilon))$ since $n \leq 2m$ in any connected component. This matches the runtime of the best known randomized $2/3$-approximation algorithms.

\section{Experimental Evaluation}

We evaluated the proposed Random Proposals algorithm against Greedy and Suitor, two established $1/2$-approximation baselines for MWM. All experiments were conducted on the KFUPM High Performance Computing Center using CPU resources. Each job ran on a single compute node with one task, 4 CPU cores, and 32~GB of memory.

The evaluation aims to answer three questions: (1) how does RP's matching quality compare to deterministic baselines across different graph families; (2) how do the parameters $K$ and $\beta$ influence quality and runtime; and (3) what is the practical quality-runtime tradeoff offered by RP. We tested all combinations of $K \in \{1, 2, 3, 6, 10, 15, 18, 20\}$ and $\beta \in \{12, 25, 50, 75, 100, 115\}$.

The benchmark graphs were drawn from the SuiteSparse Matrix Collection with randomly generated weights $w_i \in [1,1000]$. We also generated synthetic graphs using the Barab\'{a}si--Albert model~\cite{barabasi1999emergence} (ab305, ab205, ab105) and the Erd\H{o}s--R\'{e}nyi model~\cite{erdos1959random} (er1). These families represent a wide range of structural properties, from sparse road networks to scale-free social graphs to high-dimensional finite-element meshes. For all benchmark graphs, we computed exact maximum-weight matchings using Edmonds'~\cite{edmonds1965paths} and Gabow's~\cite{gabow1990data} algorithms to serve as an absolute quality reference.

\paragraph{Overall quality.}
Table~\ref{tab:summary} summarizes the best RP configuration (always at $K=20$) against both baselines. RP is competitive in matching quality across all graphs and outperforms Greedy and Suitor on nine of the thirteen instances. The improvements are particularly pronounced on graphs with irregular, scale-free degree distributions. On \texttt{kron\_g500-logn21}, RP reaches $86.96\%$ of the exact optimum, compared with $78.14\%$ for Greedy and $78.18\%$ for Suitor — a gap of nearly nine percentage points. On \texttt{com-LiveJournal}, RP achieves $90.30\%$ versus $85.12\%$ and $85.07\%$ for the baselines. The Barab\'{a}si--Albert instances (ab105, ab205, ab305) also show consistent improvements of roughly $3\%$ over both deterministic methods.

\paragraph{Effect of graph structure.}
The variation in RP's relative advantage across graph families reveals an important structural pattern. On scale-free graphs with heavy-tailed degree distributions — such as \texttt{kron\_g500-logn21}, \texttt{com-LiveJournal}, and the Barab\'{a}si--Albert instances — deterministic algorithms tend to perform poorly because high-degree hub vertices attract many competing proposals. Greedy and Suitor resolve such conflicts through fixed local rules that systematically exclude certain heavy edges, and the resulting matching may be far from optimal. RP's stochastic sampling, by contrast, occasionally discovers alternative configurations in which hub edges are matched differently, enabling recovery of weight that deterministic rules would permanently forgo. The fact that all three Barab\'{a}si--Albert instances (ab105, ab205, ab305) yield nearly identical improvement percentages ($91.21\%$ vs.\ $88.35\%$) confirms that the gain is a structural effect of the power-law degree distribution rather than a consequence of graph size.

On sparse road-network graphs (\texttt{europe\_osm}, \texttt{asia\_osm}), the situation is different. These graphs have low, bounded degree and a geometric structure in which most edges have locally unique heavy partners. Greedy and Suitor already capture near-optimal matches in this setting, and RP offers only a small additional gain (at most $0.4\%$). The random sampling adds little benefit when the local structure is simple enough that deterministic rules already succeed.

There are also graphs where RP underperforms the baselines. On \texttt{nlpkkt200} (a finite-element mesh) and \texttt{rgg\_n\_2\_23\_s0} (a random geometric graph), Greedy and Suitor exceed RP even at $K=20$. Both graphs have regular local structure and geographically clustered weights, so the greedy matching already captures most of the heavy edges without needing randomization. In these cases, RP's stochastic exploration explores configurations of similar quality without systematically improving upon the deterministic solution.

\paragraph{Effect of $K$.}
Tables~\ref{tab:kron}--\ref{tab:rgg} and Figure~\ref{fig:weight_vs_K_exact} show the full $(K,\beta)$ parameter sweep for four representative graphs. The dominant trend is that increasing $K$ consistently improves matching quality, with the rate of improvement diminishing as $K$ grows. On \texttt{kron\_g500-logn21}, quality rises from $47.46\%$ at $K=1$ to $75.65\%$ at $K=6$ and then more slowly to $86.96\%$ at $K=20$. On \texttt{com-LiveJournal}, quality passes the Greedy/Suitor level ($\approx 85\%$) at around $K=6$ and continues to improve moderately thereafter. This pattern is consistent with the theoretical bound from the Corollary: the $O(e^{-K})$ term governs the rapid early improvement, while the $O(1/K)$ term governs the slower saturation at large $K$. In practical terms, $K=6$ already achieves more than $96\%$ of the exact optimum on sparse graphs, and $K=10$ is generally sufficient to outperform the deterministic baselines on scale-free graphs.

\paragraph{Effect of $\beta$.}
In contrast to $K$, the bias parameter $\beta$ has a much weaker effect on quality. For any fixed $K$, the variation in matching weight across the six tested $\beta$ values is at most a few tenths of a percent on all graphs. This is consistent with the theoretical analysis: the softmax error $\mathbb{E}[\mathbf{SoftmaxError}(u)] \leq \ln\Delta/\beta$ decreases only inversely in $\beta$, while the stopping error decreases exponentially in $K$. At the tested values, even the smallest $\beta = 12$ provides sufficient concentration toward heavy edges; further increases in $\beta$ yield negligible gains. Accordingly, $\beta$ is best treated as a secondary fine-tuning parameter once $K$ is set.

\paragraph{Runtime tradeoff.}
Figure~\ref{fig:time_vs_K} shows that total runtime grows approximately linearly in $K$, confirming the theoretical $O(m + Kn)$ bound. RP is generally slower than Suitor and may be faster or slower than Greedy depending on the graph and the chosen $K$. On \texttt{kron\_g500-logn21}, RP at $K=20$ is substantially faster than Greedy ($9.53$\,s vs.\ $37.91$\,s) despite producing significantly better quality, though it is slower than Suitor ($2.88$\,s). On \texttt{com-LiveJournal}, RP at $K \in \{1,2\}$ actually runs faster than Suitor while already matching or exceeding Greedy quality (see Table~\ref{tab:livejournal}), offering a compelling operating point for time-sensitive applications.

Because matching quality saturates logarithmically in $K$ while runtime grows linearly, the marginal cost per unit of quality improvement increases with $K$. This makes moderate values of $K$ — roughly $K \in \{6, 10\}$ — the most cost-effective operating regime across the majority of graphs tested.

\begin{table}[ht]
\centering
\caption{%
  Comparison of RP (best configuration, always at $K{=}20$),
  Greedy, and Suitor across all benchmark graphs.
  All weights are reported as a percentage of the exact maximum-weight matching.
  RP time is the matching time only (excluding preprocessing).
  Best~$\beta$ is the value that achieved the highest RP weight.
}
\label{tab:summary}
\setlength{\tabcolsep}{4pt}
\resizebox{\textwidth}{!}{%
\begin{tabular}{l r S[table-format=2.2] S[table-format=2.2] S[table-format=2.2] r S[table-format=3.2] S[table-format=4.3] S[table-format=4.4]}
\toprule
 & & \multicolumn{3}{c}{\textbf{Weight (\% of exact optimum)}}
 & & \multicolumn{3}{c}{\textbf{Time (s)}} \\
\cmidrule(lr){3-5}\cmidrule(lr){7-9}
\textbf{Dataset}
  & \textbf{Exact weight}
  & {\textbf{RP}}
  & {\textbf{Greedy}}
  & {\textbf{Suitor}}
  & {\textbf{Best $\beta$}}
  & {\textbf{RP}}
  & {\textbf{Greedy}}
  & {\textbf{Suitor}} \\
\midrule
\multicolumn{9}{l}{} \\[2pt]
kron\_g500-logn21   & $3.513\times10^{8}$  & 86.96 & 78.14 & 78.18 & 12  &   9.53 &  37.907 &  2.8819 \\
europe\_osm         & $1.495\times10^{10}$ & 96.72 & 96.29 & 96.29 & 115 & 113.78 &  24.050 &  4.0690 \\
com-LiveJournal     & $1.260\times10^{9}$  & 90.30 & 85.12 & 85.07 & 12  &  18.29 &  13.596 &  2.4346 \\
rgg\_n\_2\_23\_s0    & $3.746\times10^{9}$  & 89.46 & 92.67 & 92.68 & 12  &  23.26 &  27.675 &  3.2798 \\
mtx2010             & $3.097\times10^{8}$  & 93.28 & 91.66 & 91.66 & 25  &   2.60 &   0.803 &  0.1414 \\
kmer\_V2a           & $1.620\times10^{10}$ & 96.87 & 96.34 & 96.34 & 25  & 161.87 &  27.632 &  9.3311 \\
asia\_osm           & $3.518\times10^{9}$  & 96.72 & 96.32 & 96.32 & 115 &  28.25 &   5.280 &  0.8119 \\
nlpkkt200           & $7.550\times10^{9}$  & 88.38 & 94.31 & 94.31 & 12  &  69.99 & 101.727 & 10.6408 \\
hugebubbles-00010   & $6.445\times10^{9}$  & 95.05 & 94.90 & 94.90 & 25  &  69.95 &  12.999 &  4.6539 \\
ab305               & $1.135\times10^{10}$ & 91.21 & 88.35 & 88.38 & 12  & 157.46 &  71.973 & 22.1548 \\
ab205               & $7.571\times10^{9}$  & 91.21 & 88.36 & 88.39 & 12  &  96.11 &  53.455 & 15.3804 \\
ab105               & $3.786\times10^{9}$  & 91.21 & 88.35 & 88.38 & 12  &  45.90 &  26.438 &  6.7433 \\
er1                 & $4.452\times10^{8}$  & 82.12 & 77.16 & 77.84 & 12  &   5.55 &  20.733 &  1.9945 \\
\bottomrule
\end{tabular}%
}
\end{table}


\bigskip\noindent
Tables~\ref{tab:kron}--\ref{tab:rgg} show the full $(K,\beta)$ parameter
sweep for four representative graphs with known exact optima.
Each cell reports the RP matching weight as a percentage of the exact optimum,
with the matching time (in seconds) shown in parentheses.
Greedy and Suitor are listed as reference baselines at the bottom of each table.

\begin{table}[ht]
\centering
\caption{%
  \texttt{kron\_g500-logn21}: RP weight as \% of exact optimum
  ($3.513\times10^{8}$) for all $(K,\beta)$ combinations.
  Matching time (s) in parentheses.
  Greedy: $78.14\%$ in $37.91$\,s.
  Suitor: $78.18\%$ in $2.88$\,s.
}
\label{tab:kron}
\setlength{\tabcolsep}{5pt}
\resizebox{\linewidth}{!}{%
\begin{tabular}{r llllll}
\toprule
\multirow{2}{*}{$K$} & \multicolumn{6}{c}{$\beta$} \\
\cmidrule(lr){2-7}
 & \multicolumn{1}{c}{12} & \multicolumn{1}{c}{25} & \multicolumn{1}{c}{50}
 & \multicolumn{1}{c}{75} & \multicolumn{1}{c}{100} & \multicolumn{1}{c}{115} \\
\midrule
 1 & 47.46\% (0.49s) & 46.91\% (0.49s) & 46.65\% (0.48s) & 46.55\% (0.48s) & 46.51\% (0.47s) & 46.49\% (0.48s) \\
 2 & 58.90\% (0.92s) & 58.40\% (0.93s) & 58.16\% (0.93s) & 58.07\% (0.92s) & 58.04\% (0.93s) & 58.02\% (0.91s) \\
 3 & 65.54\% (1.40s) & 65.10\% (1.38s) & 64.90\% (1.37s) & 64.81\% (1.36s) & 64.78\% (1.36s) & 64.77\% (1.36s) \\
 6 & 75.65\% (2.84s) & 75.35\% (2.53s) & 75.20\% (2.75s) & 75.14\% (2.52s) & 75.12\% (2.79s) & 75.11\% (2.53s) \\
10 & 81.44\% (4.79s) & 81.25\% (5.73s) & 81.15\% (4.59s) & 81.11\% (5.49s) & 81.09\% (4.54s) & 81.09\% (4.61s) \\
15 & 85.00\% (7.39s) & 84.89\% (6.06s) & 84.83\% (7.00s) & 84.80\% (6.10s) & 84.80\% (6.99s) & 84.79\% (6.11s) \\
18 & 86.30\% (8.63s) & 86.21\% (7.45s) & 86.17\% (8.49s) & 86.14\% (7.52s) & 86.13\% (8.69s) & 86.13\% (7.36s) \\
20 & 86.96\% (9.53s) & 86.89\% (8.32s) & 86.85\% (8.13s) & 86.83\% (8.43s) & 86.82\% (8.11s) & 86.81\% (8.65s) \\
\midrule
\multicolumn{7}{l}{\textit{Reference baselines}} \\
Greedy & \multicolumn{6}{l}{78.14\% \quad (37.91\,s)} \\
Suitor & \multicolumn{6}{l}{78.18\% \quad (2.88\,s)} \\
\bottomrule
\end{tabular}
}
\end{table}

\begin{table}[ht]
\centering
\caption{%
  \texttt{europe\_osm}: RP weight as \% of exact optimum
  ($1.495\times10^{10}$) for all $(K,\beta)$ combinations.
  Matching time (s) in parentheses.
  Greedy: $96.29\%$ in $24.05$\,s.
  Suitor: $96.29\%$ in $4.07$\,s.
}
\label{tab:europe}
\setlength{\tabcolsep}{5pt}
\resizebox{\linewidth}{!}{%
\begin{tabular}{r llllll}
\toprule
\multirow{2}{*}{$K$} & \multicolumn{6}{c}{$\beta$} \\
\cmidrule(lr){2-7}
 & \multicolumn{1}{c}{12} & \multicolumn{1}{c}{25} & \multicolumn{1}{c}{50}
 & \multicolumn{1}{c}{75} & \multicolumn{1}{c}{100} & \multicolumn{1}{c}{115} \\
\midrule
 1 & 84.36\% (7.45s)   & 83.27\% (7.40s)   & 82.76\% (7.31s)   & 82.57\% (7.28s)   & 82.49\% (6.87s)   & 82.45\% (6.88s)   \\
 2 & 92.84\% (14.08s)  & 92.46\% (13.81s)  & 92.28\% (13.70s)  & 92.21\% (12.70s)  & 92.18\% (12.61s)  & 92.17\% (12.65s)  \\
 3 & 95.34\% (24.28s)  & 95.23\% (20.24s)  & 95.17\% (22.84s)  & 95.14\% (22.59s)  & 95.13\% (19.96s)  & 95.13\% (19.99s)  \\
 6 & 96.60\% (41.75s)  & 96.63\% (45.76s)  & 96.63\% (38.77s)  & 96.63\% (38.64s)  & 96.63\% (35.04s)  & 96.63\% (35.10s)  \\
10 & 96.67\% (65.40s)  & 96.71\% (58.06s)  & 96.71\% (69.20s)  & 96.71\% (63.36s)  & 96.71\% (71.42s)  & 96.71\% (63.24s)  \\
15 & 96.68\% (110.37s) & 96.71\% (86.36s)  & 96.72\% (96.01s)  & 96.72\% (94.23s)  & 96.72\% (97.42s)  & 96.72\% (92.04s)  \\
18 & 96.68\% (124.13s) & 96.71\% (114.68s) & 96.72\% (101.76s) & 96.72\% (105.13s) & 96.72\% (126.34s) & 96.72\% (100.83s) \\
20 & 96.68\% (129.71s) & 96.71\% (126.75s) & 96.72\% (112.93s) & 96.72\% (113.00s) & 96.72\% (112.35s) & 96.72\% (113.78s) \\
\midrule
\multicolumn{7}{l}{\textit{Reference baselines}} \\
Greedy & \multicolumn{6}{l}{96.29\% \quad (24.05\,s)} \\
Suitor & \multicolumn{6}{l}{96.29\% \quad (4.07\,s)} \\
\bottomrule
\end{tabular}
}
\end{table}

\begin{table}[ht]
\centering
\caption{%
  \texttt{com-LiveJournal}: RP weight as \% of exact optimum
  ($1.260\times10^{9}$) for all $(K,\beta)$ combinations.
  Matching time (s) in parentheses.
  Greedy: $85.12\%$ in $13.60$\,s.
  Suitor: $85.07\%$ in $2.43$\,s.
}
\label{tab:livejournal}
\setlength{\tabcolsep}{5pt}
\resizebox{\linewidth}{!}{%
\begin{tabular}{r llllll}
\toprule
\multirow{2}{*}{$K$} & \multicolumn{6}{c}{$\beta$} \\
\cmidrule(lr){2-7}
 & \multicolumn{1}{c}{12} & \multicolumn{1}{c}{25} & \multicolumn{1}{c}{50}
 & \multicolumn{1}{c}{75} & \multicolumn{1}{c}{100} & \multicolumn{1}{c}{115} \\
\midrule
 1 & 65.92\% (1.04s)  & 65.03\% (1.03s)  & 64.62\% (0.97s)  & 64.47\% (1.03s)  & 64.41\% (1.01s)  & 64.38\% (1.59s)  \\
 2 & 76.06\% (1.94s)  & 75.40\% (1.99s)  & 75.08\% (1.98s)  & 74.97\% (1.97s)  & 74.92\% (1.98s)  & 74.90\% (1.92s)  \\
 3 & 80.67\% (2.79s)  & 80.16\% (2.99s)  & 79.92\% (2.95s)  & 79.83\% (2.97s)  & 79.79\% (2.94s)  & 79.77\% (2.70s)  \\
 6 & 86.17\% (5.80s)  & 85.90\% (5.29s)  & 85.77\% (5.20s)  & 85.72\% (5.69s)  & 85.70\% (5.51s)  & 85.69\% (5.66s)  \\
10 & 88.60\% (9.55s)  & 88.44\% (8.61s)  & 88.37\% (8.52s)  & 88.34\% (9.35s)  & 88.33\% (9.28s)  & 88.32\% (9.35s)  \\
15 & 89.77\% (14.15s) & 89.67\% (12.67s) & 89.62\% (17.30s) & 89.61\% (12.41s) & 89.60\% (14.00s) & 89.59\% (13.94s) \\
18 & 90.13\% (17.11s) & 90.05\% (16.61s) & 90.01\% (15.36s) & 89.99\% (15.05s) & 89.99\% (16.75s) & 89.98\% (16.70s) \\
20 & 90.30\% (18.29s) & 90.23\% (18.77s) & 90.19\% (17.96s) & 90.18\% (16.74s) & 90.17\% (18.57s) & 90.17\% (17.80s) \\
\midrule
\multicolumn{7}{l}{\textit{Reference baselines}} \\
Greedy & \multicolumn{6}{l}{85.12\% \quad (13.60\,s)} \\
Suitor & \multicolumn{6}{l}{85.07\% \quad (2.43\,s)} \\
\bottomrule
\end{tabular}
}
\end{table}

\begin{table}[ht]
\centering
\caption{%
  \texttt{rgg\_n\_2\_23\_s0}: RP weight as \% of exact optimum
  ($3.746\times10^{9}$) for all $(K,\beta)$ combinations.
  Matching time (s) in parentheses.
  Greedy: $92.67\%$ in $27.67$\,s.
  Suitor: $92.68\%$ in $3.28$\,s.
}
\label{tab:rgg}
\setlength{\tabcolsep}{5pt}
\resizebox{\linewidth}{!}{%
\begin{tabular}{r llllll}
\toprule
\multirow{2}{*}{$K$} & \multicolumn{6}{c}{$\beta$} \\
\cmidrule(lr){2-7}
 & \multicolumn{1}{c}{12} & \multicolumn{1}{c}{25} & \multicolumn{1}{c}{50}
 & \multicolumn{1}{c}{75} & \multicolumn{1}{c}{100} & \multicolumn{1}{c}{115} \\
\midrule
 1 & 55.73\% (1.30s)  & 54.95\% (1.28s)  & 54.59\% (1.32s)  & 54.46\% (1.28s)  & 54.40\% (1.39s)  & 54.38\% (1.25s)  \\
 2 & 67.94\% (2.42s)  & 67.34\% (2.43s)  & 67.05\% (2.33s)  & 66.95\% (2.38s)  & 66.91\% (2.42s)  & 66.89\% (2.37s)  \\
 3 & 73.87\% (3.77s)  & 73.40\% (3.23s)  & 73.17\% (3.58s)  & 73.09\% (3.50s)  & 73.06\% (3.60s)  & 73.04\% (3.49s)  \\
 6 & 81.74\% (7.17s)  & 81.46\% (7.07s)  & 81.32\% (7.02s)  & 81.27\% (6.16s)  & 81.25\% (6.72s)  & 81.24\% (6.44s)  \\
10 & 85.85\% (13.41s) & 85.67\% (11.33s) & 85.58\% (11.45s) & 85.56\% (10.07s) & 85.54\% (11.08s) & 85.54\% (10.31s) \\
15 & 88.21\% (17.81s) & 88.10\% (16.73s) & 88.04\% (16.55s) & 88.03\% (14.66s) & 88.02\% (14.64s) & 88.01\% (14.79s) \\
18 & 89.04\% (21.36s) & 88.95\% (19.57s) & 88.91\% (19.70s) & 88.90\% (17.81s) & 88.89\% (17.40s) & 88.88\% (17.77s) \\
20 & 89.46\% (23.26s) & 89.38\% (21.15s) & 89.34\% (19.47s) & 89.33\% (19.49s) & 89.33\% (19.30s) & 89.32\% (19.64s) \\
\midrule
\multicolumn{7}{l}{\textit{Reference baselines}} \\
Greedy & \multicolumn{6}{l}{92.67\% \quad (27.67\,s)} \\
Suitor & \multicolumn{6}{l}{92.68\% \quad (3.28\,s)} \\
\bottomrule
\end{tabular}
}
\end{table}

\begin{figure}[!htbp]
\centering
\begin{adjustbox}{max width=\linewidth,max totalheight=0.55\textheight,center}
\begin{tikzpicture}
\begin{axis}[
  width=0.86\textwidth,
  height=7cm,
  xlabel={Number of rounds $K$},
  ylabel={Weight / Exact weight (\%)},
  xmin=0, xmax=21,
  xtick={1,2,3,6,10,15,18,20},
  grid=both,
  grid style={line width=0.3pt, draw=gray!30},
  major grid style={line width=0.5pt, draw=gray!50},
  legend pos=south east,
  legend style={font=\small, fill=white, fill opacity=0.9, draw=gray!60},
  legend cell align=left,
  thick,
]

\addplot[color=col1, mark=o] coordinates {
  (1,84.19) (2,92.79) (3,95.31) (6,96.62)
  (10,96.71) (15,96.72) (18,96.72) (20,96.72)
};
\addlegendentry{asia\_osm}

\addplot[color=col2, mark=square] coordinates {
  (1,65.92) (2,76.06) (3,80.67) (6,86.17)
  (10,88.60) (15,89.77) (18,90.13) (20,90.30)
};
\addlegendentry{com-LiveJournal}

\addplot[color=col3, mark=triangle] coordinates {
  (1,84.36) (2,92.84) (3,95.34) (6,96.63)
  (10,96.71) (15,96.72) (18,96.72) (20,96.72)
};
\addlegendentry{europe\_osm}

\addplot[color=col4, mark=diamond] coordinates {
  (1,76.80) (2,87.08) (3,91.14) (6,94.49)
  (10,95.00) (15,95.05) (18,95.05) (20,95.05)
};
\addlegendentry{hugebubbles-00010}

\addplot[color=col5, mark=pentagon] coordinates {
  (1,85.92) (2,93.30) (3,95.63) (6,96.79)
  (10,96.86) (15,96.87) (18,96.87) (20,96.87)
};
\addlegendentry{kmer\_V2a}

\addplot[color=col6, mark=star] coordinates {
  (1,47.46) (2,58.90) (3,65.54) (6,75.65)
  (10,81.44) (15,85.00) (18,86.30) (20,86.96)
};
\addlegendentry{kron\_g500-logn21}

\addplot[color=col7, mark=oplus] coordinates {
  (1,70.29) (2,81.48) (3,86.30) (6,91.36)
  (10,92.85) (15,93.21) (18,93.27) (20,93.28)
};
\addlegendentry{mtx2010}

\addplot[color=col8, mark=x] coordinates {
  (1,54.41) (2,66.03) (3,71.84) (6,79.78)
  (10,84.14) (15,86.82) (18,87.84) (20,88.38)
};
\addlegendentry{nlpkkt200}

\addplot[color=col9, mark=asterisk] coordinates {
  (1,55.73) (2,67.94) (3,73.87) (6,81.74)
  (10,85.85) (15,88.21) (18,89.04) (20,89.46)
};
\addlegendentry{rgg\_n\_2\_23\_s0}

\addplot[dashed, black] coordinates {(0,100)(21,100)};

\end{axis}
\end{tikzpicture}
\end{adjustbox}
\caption{%
  Normalized matching weight (RP / exact $\times\,100\%$) as a function of $K$, at the best $\beta$ for each graph.
  Large sparse graphs (\texttt{asia\_osm}, \texttt{europe\_osm}, \texttt{kmer\_V2a})
  approach $97\%$ of the exact optimum by $K=6$, while scale-free graphs
  (\texttt{kron\_g500-logn21}) and structured meshes (\texttt{nlpkkt200}) converge more slowly.
  The concave shape is consistent with the theoretical bound $\varepsilon(K) = O(1/K) + O(e^{-K})$.
  The dashed line marks the exact optimum.
}
\label{fig:weight_vs_K_exact}
\end{figure}

\begin{figure}[!htbp]
\centering
\begin{adjustbox}{max width=\linewidth,max totalheight=0.55\textheight,center}
\begin{tikzpicture}
\begin{axis}[
  width=0.86\textwidth,
  height=7cm,
  xlabel={Number of rounds $K$},
  ylabel={Total time (s)},
  xmin=0, xmax=21,
  xtick={1,2,3,6,10,15,18,20},
  grid=both,
  grid style={line width=0.3pt, draw=gray!30},
  major grid style={line width=0.5pt, draw=gray!50},
  legend pos=north west,
  legend style={font=\small, fill=white, fill opacity=0.9, draw=gray!60},
  legend cell align=left,
  thick,
]

\addplot[color=col1, mark=o, mark size=2.5pt] coordinates {
  (1, 15.740) (2, 14.498) (3, 14.615) (6, 16.614)
  (10, 19.233) (15, 20.525) (18, 21.972) (20, 22.915)
};
\addlegendentry{kron\_g500-logn21}

\addplot[color=col2, mark=square, mark size=2.5pt] coordinates {
  (1, 7.353) (2, 8.054) (3, 9.040) (6, 11.563)
  (10, 15.229) (15, 20.394) (18, 22.287) (20, 23.994)
};
\addlegendentry{com-LiveJournal}

\addplot[color=col3, mark=triangle, mark size=2.5pt] coordinates {
  (1, 39.007) (2, 54.107) (3, 52.913) (6, 74.948)
  (10, 106.451) (15, 150.726) (20, 178.445)
};
\addlegendentry{ab305}

\addplot[color=col4, mark=diamond, mark size=2.5pt] coordinates {
  (1, 43.618) (2, 45.441) (3, 52.259) (6, 61.256)
  (10, 75.151) (15, 100.689) (18, 119.116) (20, 118.511)
};
\addlegendentry{nlpkkt200}

\addplot[color=col5, mark=pentagon, mark size=2.5pt] coordinates {
  (1, 23.336) (2, 29.123) (3, 38.511) (6, 54.799)
  (10, 81.176) (15, 111.760) (18, 128.209) (20, 133.420)
};
\addlegendentry{europe\_osm}

\addplot[color=col6, mark=star, mark size=2.5pt] coordinates {
  (1, 13.494) (2, 14.747) (3, 15.539) (6, 18.910)
  (10, 23.747) (15, 28.054) (18, 31.147) (20, 32.237)
};
\addlegendentry{rgg\_n\_2\_23\_s0}

\end{axis}
\end{tikzpicture}
\end{adjustbox}
\caption{%
  Total RP runtime (preprocessing + matching) as a function of $K$,
  averaged over all $\beta$ values. Runtime grows approximately linearly with $K$,
  consistent with the $O(m + Kn)$ bound.
  Larger or denser graphs (\texttt{ab305}, \texttt{europe\_osm}) exhibit steeper scaling.
  Because quality saturates logarithmically while runtime grows linearly, intermediate
  values of $K$ offer the best cost-effectiveness.
}
\label{fig:time_vs_K}
\end{figure}
\FloatBarrier

\section{Conclusion}

We presented Random Proposals (RP), a randomized softmax-based local-improvement algorithm for the maximum weighted matching problem. The algorithm introduces a tunable mechanism that balances exploration and exploitation through the parameters $K$ (number of rounds) and $\beta$ (softmax temperature), providing explicit control over the quality-runtime tradeoff.

Theoretically, we showed that RP achieves an expected $(1/2-\varepsilon)$ approximation ratio with $K = \Theta(\ln(1/\varepsilon)/p_{\min})$ rounds and $\beta = O(\ln\Delta/\varepsilon)$, and that its total runtime is $O(m + Kn)$, which simplifies to $O(m\log(1/\varepsilon))$ under the mild constant-$p_{\min}$ assumption. The approximation proof reduces to a clean two-case analysis: optimal edges that are already non-improving are handled deterministically, while improving edges that remain undiscovered are bounded using the geometric stopping-error lemma.

Experimentally, RP consistently achieves competitive performance and outperforms both Greedy and Suitor on the majority of the benchmarks tested, with the largest gains on graphs with irregular, scale-free degree distributions where deterministic local rules are most constrained. On regular or geometrically structured graphs, deterministic baselines may retain better quality at lower cost. The parameter $K$ is the primary quality control knob: increasing $K$ yields logarithmic quality gains at linear runtime cost, making moderate values of $K$ most cost-effective in practice.

Future work includes developing adaptive strategies for online selection of $K$ and $\beta$, improving robustness across diverse graph families, and exploring parallel or distributed implementations to further enhance scalability.


\clearpage
\bibliographystyle{unsrt}
\bibliography{references}

\end{document}